\journal{   }
\newtheorem{proposition}{Proposition}
\theoremstyle{definition}
\patchcmd{\emailauthor}{(#2)}{}{}{}
\patchcmd{\urlauthor}{(#2)}{}{}{}
\newcommand{\XX}{\mathcal{X}}
\begin{document}

\begin{frontmatter}
\title{Fairness and Efficiency in Cake-Cutting with Single-Peaked Preferences}
\author[add1]{Bhavook Bhardwaj}
\ead{bhavook17r@isid.ac.in}
\author[add2]{Rajnish Kumar}
\ead{rajnish.kumar@qub.ac.uk}
\author[add2]{Josu\'e Ortega}
\ead{j.ortega@qub.ac.uk}

\address[add1]{Indian Statistical Institute, Delhi, India.}
\address[add2]{Queen's Management School, Queen's University Belfast, UK.}

\date{\today}

\begin{abstract}
	We study the cake-cutting problem where agents have single-peaked preferences over the cake. We show that a recently proposed mechanism by \cite{wang2019cake} to obtain envy-free allocations can yield large welfare losses. Using a simplifying assumption, we characterize all Pareto optimal allocations, which have a simple structure: are peak-preserving and non-wasteful. Finally, we provide simple alternative mechanisms that Pareto dominate that of Wang--Wu and achieve envy-freeness or Pareto optimality.
\end{abstract}

\begin{keyword}
	cake-cutting \sep fair division \sep single-peaked preferences.\\
	{\it JEL Codes:} C78.
\end{keyword}
\end{frontmatter}
	
	\newpage
	\setcounter{footnote}{0}

\section{Introduction}
\label{sec:introduction}

The problem of how to fairly divide a heterogeneous good among $n$ agents is one of the oldest problems in fair division \citep{steinhaus1948}. Known as cake-cutting, this problem has spanned a long interdisciplinary literature in economics, computer science, mathematics and political science that includes multiple real-life applications, such as the division of land, inheritances, and computing resources. 

One of the  main goals of the cake-cutting literature is the construction of division methods that produce envy-free allocations. Such methods exist, but they require a large amount of information regarding agents' preferences \citep{brams1995envy,aziz2016discrete}. However, when agents' preferences are well-behaved, an envy-free allocation can be computed with  few queries about agents' preferences. In particular, when agents have single-peaked preferences, an envy-free allocation can be achieved with only $2n$ queries, using a clever procedure recently proposed by  \cite{wang2019cake}.\footnote{In the more general class of piecewise linear preferences, an envy-free allocation can be computed using $\Omega(n^6)$ queries \citep{kurokawa2013cut}. A connected envy-free allocation cannot be obtained by a mechanism with finite queries \citep{stromquist2008envy}.} Single-peaked preferences are particularly interesting to study for two reasons. First, they can be described concisely: agents are specified as peak points, who prefer the pieces of cake that are close to their own locations. Second, single-peaked preferences can realistically describe preferences over land resources, clean water and mineral deposits; and have been thoroughly studied in other fair division problems.\footnote{E.g. \cite{moulin1984generalized,sprumont1991division, ehlers2002strategy,kasajima2013probabilistic,long2019strategy}, among others. Single-peaked preferences have also been (implicitly) used in cake-cutting \citep[][p. 17]{weller1985fair}.}

Although the Wang--Wu procedure achieves envy-freeness with low informational requirements, it may produce allocations that are Pareto dominated. In this note, we show that the Wang--Wu procedure may waste up to, but not more than, $\frac{n-1}{n}$ of the total achievable utility (Proposition \ref{thm:prop1}). Obtaining a Pareto optimal allocation is easy only when single-peaked preferences have a common slope; in this case they have a simple structure and are completely characterized by being peak-preserving and non-wasteful (Proposition \ref{thm:prop2}). Under those assumptions, we can construct two simple mechanisms that Pareto dominate the Wang--Wu mechanism, while attaining envy-freeness or Pareto optimality. Alternatively, the utilitarian mechanism that returns the allocation that maximizes the sum of agents' utilities is another reasonable option that is efficient and also easy to describe (Proposition \ref{thm:prop3}, Table \ref{tab:properties}).

\section{Model}
We consider the standard cake-cutting problem \citep{nicolo2008,procaccia2016}. The cake is the $[0, 1]$ interval and the set of agents is $N=\{1, . . . , n\}$. A piece of cake $P$ is a finite set of disjoint subintervals of $[0, 1]$. Each agent $i\in N$ has a integrable, non-negative value density function $v_i$. For each piece of cake $P$, an agent’s value or utility on $P$ is denoted by $V_i(P)$ and defined by the integral of its density function, i.e., $V_i (P) \coloneqq \int_{P}v_i(x) dx$. Thus, $V_i : 2^{[0,1]} \rightarrow \mathbb{R}^+$ is called the valuation or utility function of agent $i$. The definition of the valuation function implies that it is additive and non-atomic. We simplify notation by writing $V_i(x, y)$ instead of $V_i([x, y])$ for each interval $[x, y]$. We normalize agents’ valuations so that $V_i(0, 1) = 1$ for each $i\in N$. A {\it cake-cutting problem} is a triple $([0,1], N, (v_i)_{i \in N})$.

We assume that agents' preferences over the cake are single-peaked. Agent $i$ has a \emph{single-peaked valuation} when her value density function $v_i$ satisfies that there is a number $p_i \in [0, 1]$ such that $v_i(p_i) = \max_{x \in [0,1]} v_i(x)$ and $v_i(x) = \max\{0, v_i(p_i) -k_i\left\vert x-p_i \right\vert\}$ for all $x \in [0, 1]$ and some coefficient $k_i > 0$. We call $p_i$ and $v_i(p_i)$ the peak location and peak density of agent $i$, respectively. Notice that the coefficient $k_i$ as well as the density function is uniquely determined by the peak location and peak density as the normalized assumption says $\int_{0}^{1} v_i(x)dx = 1$. We use the notation $U_i$ to denote the maximum subinterval such that agent $i$ has positive value on every point, i.e. $v_i(x)>0$ for all $x \in U_i$ and $v_i(x)=0$ for all $x \in [0,1]\setminus U_i$. The points $l_i$ and $r_i$ denote the left and right endpoints of $U_i$. We assume without loss of generality that $\bigcup_{i \in N}  U_i=[0,1]$.\footnote{Pieces of cake that are not desired by anybody can be dropped or assigned arbitrarily, and play no role with regards to fairness and efficiency considerations.}

An \emph{allocation} $X = (X_1, . . . , X_n)$ is a partition of the cake among the $n$ agents, i.e. each agent $i$ receives the piece of cake $X_i$, the pieces are disjoint, and $\bigcup_{i \in N} X_i = [0,1] $. An allocation $X$ is \emph{proportional} if $V_i(X_i) \geq 1/n$ for all $i \in N,$ and \emph{envy-free} (EF) if $V_i(X_i) \geq V_i(X_j)$ for all $i, j \in N$. Envy-freeness implies proportionality, and they are equivalent for $n=2$. None of these fairness properties, however, imply efficiency in the sense of Pareto optimality. An allocation $X$ is \emph{Pareto optimal} (PO) if there is no other allocation $X'$ that Pareto dominates it, i.e. such that, for all $i \in N$, $V_i(X'_i) \geq V_i(X_i)$; and for some $i$, $V_i(X'_i) > V_i(X_i)$. 

A mechanism $M$ is a function mapping a series of queries to agents about their valuations into an allocation. In the standard \cite{robertson1998} framework, in which we focus, only two types of queries are allowed: either an agent is asked to \emph{cut} the cake at an specific point so that the left piece has a specific valuation, or an agent is asked to \emph{evaluate} a piece of cake, i.e. reveal her valuation for such a piece. Mechanism $M$ is said to be envy-free (resp. proportional, Pareto optimal) if it always produces an envy-free (resp. proportional, Pareto optimal) allocation. 

\section{Results}

\cite{wang2019cake} provide an ingenious mechanism to obtain envy-free allocations with only $2n$ queries. Their mechanism asks two cut queries to each agent (in the standard Robertson--Webb model) to find the $l_i, p_i, r_i$ points (see their Algorithm 1). Then, all those points are ordered in increasing sequence, together with the $0$ and $1$ endpoints. Between any two of these points, the cake is divided into $2n$ equidistant pieces, and agent 1 receives pieces $1$ and $2n$, agent 2 receives pieces $2$ and $2n - 1$, and so on until agent $n$, who receives pieces $n$ and $n+1$ (see Figure \ref{fig:fig1} for an example). Because valuations are single-peaked, their procedure computes an envy-free allocation with only $2n$ queries (i.e. substantially less queries than those required in more general preference domains, see \cite{kurokawa2013cut}).

However, a problem with the Wang--Wu mechanism is that it may produce an allocation that is not Pareto optimal, as they point out. In fact, the Wang--Wu procedure only produces a Pareto optimal allocation in a very special case: when $p_i=p_j$ and $k_i=k_j$ for all $i,j \in N$. 

Moreover, the welfare losses than can occur in the Wang--Wu mechanism can be substantially large, as the example in Figure \ref{fig:fig1} shows. In our example, the Wang--Wu procedure gives to each agent only $1/n$ of the utility that they obtain in the unique PO allocation. The example can be extended to any number of agents to show the severity of the welfare losses in the Wang--Wu procedure \textit{vis-a-vis} Pareto optimal allocations. To formalise this observation, let $\XX^{PO}$ be the set of all PO allocations and $X^{WW}$ be the allocation suggested by the Wang--Wu procedure. Let $T^{PO} \coloneqq \displaystyle \min_{X \in \XX^{PO}} \frac{\sum_{i \in N} V(X_i)}{n}$ and $T^{WW} \coloneqq \frac{\sum_{i \in N} V(X^{WW}_i)}{n}$.\footnote{The choice of the Pareto optimal allocation with the minimum sum of utilities is arbitrary. Because in the proof we construct a cake-cutting problem with a unique Pareto optimal allocation, our result can be rephrased using $T^{PO} \coloneqq \displaystyle \max_{X \in \XX^{PO}}\frac{\sum_{i \in N} V(X_i)}{n}$. Our definition of welfare losses is related to the \emph{price of fairness} \citep{bertsimas2011price}.} The welfare losses in the Wang--Wu mechanism are defined as $WL \coloneqq T^{PO}-T^{WW}$. Our first Proposition provides a tight upper bound for $WL$.

\begin{figure}[h]
\centering
\begin{subfigure}{.99\linewidth}
	\centering
	\begin{tikzpicture}
	\draw[->] (-.1,0) -- (6.2,0) node[right] {$x$};
	\draw[->] (0,-.1) -- (0,3.2) node[above] {$v_i(x)$};
	
	\draw[] (0,0)--(1,3) -- (2,0)-- (0,0);
	\draw[] (6,0)--(5,3) -- (4,0)-- (6,0);
	\draw[] (4,0)--(3,3) -- (2,0)-- (4,0);
	
	\node[] at (1,-.3) {$p_1$};
	\node[] at (3,-.3) {$p_2$};
	\node[] at (5,-.3) {$p_3$};
	\node[] at (6,-.3) {$1$};
	\end{tikzpicture}
	\caption{Single-peaked preferences of three agents over the cake.} \label{fig:example1}
\end{subfigure}\\

\begin{subfigure}{.5\linewidth}
	\centering
	\begin{tikzpicture}[scale=.65]
	\draw[->] (-.1,0) -- (6.2,0);
	\draw[->] (0,-.1) -- (0,3.2);

	\draw[fill=gray] (0,0)--(1,3) -- (2,0)-- (0,0);
	\draw[fill=black] (6,0)--(5,3) -- (4,0)-- (6,0);
	\draw[] (4,0)--(3,3) -- (2,0)-- (4,0);
	
	\end{tikzpicture}
	\caption{Unique Pareto-optimal allocation.} \label{fig:M1}
\end{subfigure}%
\begin{subfigure}{.5\linewidth}
	\centering
	\begin{tikzpicture}[scale=.65]
	\draw[->] (-.1,0) -- (6.2,0) ;
	\draw[->] (0,-.1) -- (0,3.2) ;
	
	\draw[] (0,0)--(1,3) -- (2,0)-- (0,0);
	\draw[] (6,0)--(5,3) -- (4,0)-- (6,0);
	\draw[] (4,0)--(3,3) -- (2,0)-- (4,0);
	
	\draw[] (1,0)--(1,3);
	\draw[] (3,0)--(3,3);
	\draw[] (5,0)--(5,3);
	
	\foreach \x in {1,2,3,4,5}
	\draw (\x/6,0) -- (\x/6,\x*.5);
	\foreach \x in {1,2,3,4,5}
	\draw (2-\x/6,0) -- (2-\x/6,\x*.5);
	
	\foreach \x in {1,2,3,4,5}
	\draw (2+\x/6,0) -- (2+\x/6,\x*.5);
	\foreach \x in {1,2,3,4,5}
	\draw (4-\x/6,0) -- (4-\x/6,\x*.5);
	
	\foreach \x in {1,2,3,4,5}
	\draw (4+\x/6,0) -- (4+\x/6,\x*.5);
	\foreach \x in {1,2,3,4,5}
	\draw (6-\x/6,0) -- (6-\x/6,\x*.5);
	
	\draw[fill=gray] (0,0)--(1/6,0) -- (1/6,.5)-- (0,0);
	\draw[fill=gray] (1,0)--(5/6,0) -- (5/6,2.5)-- (1,3) -- cycle;
	\draw[fill=gray] (2,0)--(2+1/6,0) -- (2+1/6,.5)-- (2,0);
	\draw[fill=gray] (4,0)--(4+1/6,0) -- (4+1/6,.5)-- (4,0);
	\draw[fill=gray] (3,0)--(2+5/6,0) -- (2+5/6,2.5)-- (3,3) -- cycle;
	\draw[fill=gray] (5,0)--(4+5/6,0) -- (4+5/6,2.5)-- (5,3) -- cycle;
	
	\draw[fill=gray] (2,0)--(2-1/6,0) -- (2-1/6,.5)-- (2,0);
	\draw[fill=gray] (4,0)--(4-1/6,0) -- (4-1/6,.5)-- (4,0);
	\draw[fill=gray] (6,0)--(6-1/6,0) -- (6-1/6,.5)-- (6,0);
	
	\draw[fill=gray] (1,3)--(1+1/6,2.5) -- (1+1/6,0)-- (1,0) -- cycle;
	\draw[fill=gray] (3,3)--(3+1/6,2.5) -- (3+1/6,0)-- (3,0) -- cycle;
	\draw[fill=gray] (5,3)--(5+1/6,2.5) -- (5+1/6,0)-- (5,0) -- cycle;
	
	\draw[fill=black] (2/6,1)--(4/6,2) -- (4/6,0)-- (2/6,0) -- cycle;
	\draw[fill=black] (2+2/6,1)--(2+4/6,2) -- (2+4/6,0)-- (2+2/6,0) -- cycle;
	\draw[fill=black] (4+2/6,1)--(4+4/6,2) -- (4+4/6,0)-- (4+2/6,0) -- cycle;
	
	\draw[fill=black] (2-2/6,1)--(2-4/6,2) -- (2-4/6,0)-- (2-2/6,0) -- cycle;
	\draw[fill=black] (4-2/6,1)--(4-4/6,2) -- (4-4/6,0)-- (4-2/6,0) -- cycle;
	\draw[fill=black] (6-2/6,1)--(6-4/6,2) -- (6-4/6,0)-- (6-2/6,0) -- cycle;
	\end{tikzpicture}
	\caption{Wang-Wu allocation} \label{fig:M2}
\end{subfigure}
\caption{Agents 1, 2, and 3 receive the grey, white, and black pieces respectively. In the unique PO allocation (which is also EF), the sum of individual valuations is 3, whereas in the Wang--Wu allocation, the sum is only 1.}
\label{fig:fig1}
\end{figure}
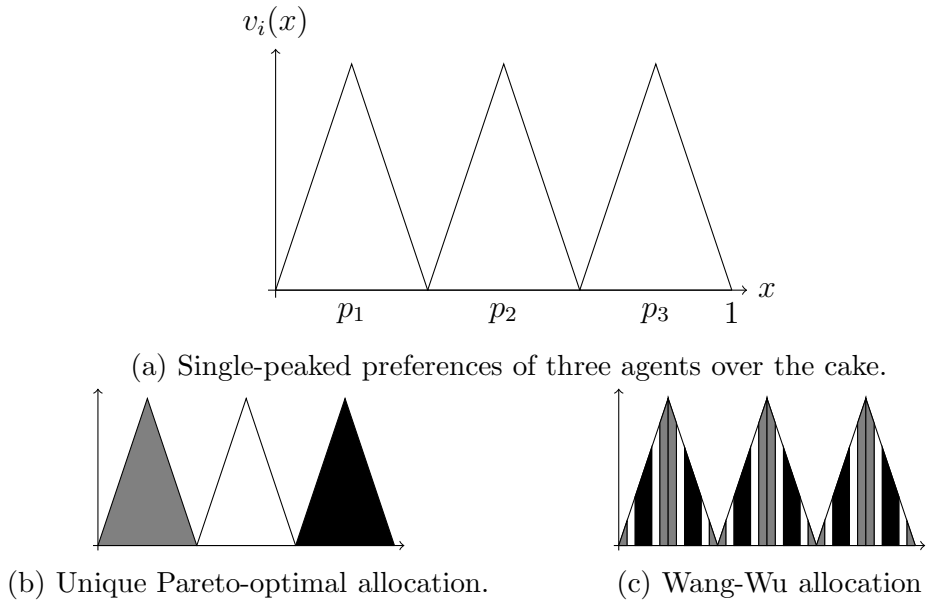

\begin{proposition}
	\label{thm:prop1}
	For any $n$, the welfare losses in the Wang--Wu mechanism $WL$ can be up to, but no more than, $\frac{n-1}{n}$.
\end{proposition}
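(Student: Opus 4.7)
My plan is to prove the bound in two parts: first establishing that $WL \leq (n-1)/n$ in general, and then exhibiting an example that attains this bound.

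For the upper bound I would combine two simple observations. The Wang--Wu allocation is envy-free, hence proportional, so $V_i(X_i^{WW}) \geq 1/n$ for every $i \in N$, which gives $T^{WW} \geq 1/n$. On the other hand, because valuations are normalized to $V_i(0,1) = 1$, any allocation $X$ satisfies $\sum_{i \in N} V_i(X_i) \leq n$, so $T^{PO} \leq 1$. Subtracting then yields $WL \leq 1 - 1/n = (n-1)/n$.

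For tightness I would generalize the three-agent construction in Figure \ref{fig:fig1}. Let agent $i$ have a triangular density supported on $U_i = [(i-1)/n, i/n]$ with peak at $p_i = (2i-1)/(2n)$; normalization then forces the peak density to equal $2n$. Because the supports partition $[0,1]$ and each agent has zero value outside her own support, the unique Pareto optimal allocation assigns $U_i$ to agent $i$, so $T^{PO} = 1$. It remains to compute $T^{WW}$ on this profile: ordering the $l_i, p_i, r_i$ values with $\{0,1\}$ produces $2n+1$ equispaced cuts, which the Wang--Wu procedure further subdivides into $4n^2$ equal sub-pieces, with agent $i$ receiving sub-pieces $i$ and $2n+1-i$ from each of the $2n$ intervals.

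The main technical step is to verify that $V_i(X_i^{WW}) = 1/n$. Only the two sub-intervals forming $U_i$ contribute positive value to agent $i$, and I expect the linearity of the triangular density to do most of the work: on the rising portion of $v_i$, the value of the $k$-th sub-piece is proportional to $2k-1$, so the pair of sub-pieces indexed $i$ and $2n+1-i$ contributes a quantity proportional to $(2i-1)+(4n+1-2i)=4n$, which is independent of $i$. A short integral converts this symmetry into the identity $V_i(X_i^{WW}) = 1/n$, giving $T^{WW} = 1/n$ and hence $WL = (n-1)/n$. The only place that requires care is keeping track of cut indices to confirm that agent $i$'s two in-support pieces really are the $i$-th and $(2n+1-i)$-th sub-pieces on each side of her peak.
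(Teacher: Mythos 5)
Your proposal is correct and follows essentially the same route as the paper: the upper bound via envy-freeness implying proportionality (so $T^{WW}\geq 1/n$ while $T^{PO}\leq 1$), and tightness via a profile with pairwise disjoint supports where each agent's unique Pareto-optimal share is her whole support but Wang--Wu gives her exactly $1/n$. Your explicit $(2k-1)$-weight computation just fills in the arithmetic the paper states without detail.
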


\begin{proof}
	To show that the upper bound of $\frac{n-1}{n}$ is achievable, construct a cake-cutting problem such that $\bigcap_{i\in N} U_i = \emptyset$. Then, there is a unique Pareto optimal allocation $X^{PO}$ such that $V_i(X_i^{PO})=1$ for all $i \in N$. 	In contrast, in the Wang--Wu procedure $[l_i,p_i]$ is divided into $2n$ equidistant pieces, of which agent $i$ receives 2 that she jointly values at $1/2n$; similarly she receives 2 of the $2n$ equidistant pieces in which the interval $[p_i, r_i]$ is divided, which she also jointly values at $1/2n$. Thus, $V_i(X_i^{WW})=1/n$ for all $i \in N$, and thus $T^{PO}-T^{WW}=\frac{n-1}{n}$.

	To show that this upper bound is tight, note that the Wang--Wu procedure is envy-free, and thus proportional, so that $V_i(X_i^{WW})\geq 1/n$, and thus, for any cake-cutting problem, $T^{PO}-T^{WW} \leq \frac{n-1}{n}$.
\end{proof}

The trade-off between efficiency and envy-freeness is not specific to the Wang--Wu procedure, and is present in several fair division problems \citep[e.g.][]{maniquet2000resource}, including cake-cutting.\footnote{Large welfare losses also arise in matching problems \citep{ortega2018social,ortega2019losses}.} To understand this trade-off better, we provide a characterization of all Pareto-optimal mechanisms. This is substantially easier to do when all valuation density functions $v_i$ have the same slope, i.e. $k_i=k$ for all $i \in N$, an assumption that we impose from now.\footnote{One may consider the following broader preference domain: if $l_i \leq l_j$, then $r_i \leq r_j$ for all $i,j \in N$. Our results do not extend to this more general preference domain.} We focus on the generic case in which all peaks are different, i.e. $p_i \neq p_j$ for all $i, j \in N$.\footnote{Our results extend to the case when some agents have the same peaks \emph{mutatis mutandis}. In such a case, an allocation may be disconnected only between agents who have the same peaks.}

An allocation $X$ is \emph{non-wasteful} if $X_i \subseteq U_i$ for all $i$. A connected allocation is completely described by $n-1$ cutting points $c_1 \leq \ldots \leq c_{n-1}$ such that $X_1=[0,c_1]$, $X_2=(c_1,c_2]$, and so on. A connected allocation is \emph{peak-preserving} if the interval $[0,c_1]$ is allocated to the agent with the smallest peak point, $(c_1,c_2]$ is allocated to the agent with the second smallest peak point, and so on, until the piece $(c_{n-1},1]$ is allocated to the agent with the largest peak point. In our second Proposition, we show that non-wastefulness and peak-preservingness completely characterize Pareto optimal allocations, which implies that all Pareto optimal allocations are connected.

\begin{proposition}
	\label{thm:prop2}
	When $k_i=k$ for all $i\in N$, an allocation is Pareto optimal if and only if it is a non-wasteful and peak-preserving allocation.
\end{proposition}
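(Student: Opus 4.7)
The plan is to prove the two directions separately; for the harder implication I invoke Lyapunov's convexity theorem to reduce Pareto optimality to a scalarization.

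For $(\Rightarrow)$, non-wastefulness follows from a standard transfer argument: if $X_i$ contains a positive-measure subset of $[0,1]\setminus U_i$, then because $\bigcup_j U_j = [0,1]$ some agent $j\neq i$ has positive density on part of it, and reassigning that part from $i$ to $j$ strictly improves $V_j$ without affecting $V_i$---contradicting Pareto optimality. Next, assuming $X$ is non-wasteful but not peak-preserving, I can find agents $i,j$ with $p_i<p_j$ and measurable subsets $A\subseteq X_j$ and $B\subseteq X_i$ of positive measure with $\sup A\le \inf B$. A short case analysis of the truncated-triangular densities---tracking how each $U_k=[l_k,r_k]$ is clipped at $0$ and $1$---shows that under the common slope $k$, $p_i<p_j$ forces $l_i\le l_j$ and $r_i\le r_j$; combined with non-wastefulness this puts both $A$ and $B$ inside $U_i\cap U_j$. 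On that common support the identity
\[
v_i(x)-v_j(x) = [v_i(p_i)-v_j(p_j)] - k\bigl[|x-p_i| - |x-p_j|\bigr]
\]
is weakly decreasing in $x$. I then perform a value-matched swap: choose small $A'\subseteq A$ and $B'\subseteq B$ with $V_j(A')=V_j(B')$, and exchange them. Agent $j$'s utility is preserved by construction, and the monotonicity above gives $V_i(A')>V_i(B')$, a strict Pareto improvement---contradiction.

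For $(\Leftarrow)$, since each value measure is non-atomic, Lyapunov's convexity theorem implies that the set of achievable utility vectors is convex in $\mathbb{R}^n$; hence $X$ is Pareto optimal iff there exist weights $\lambda_1,\ldots,\lambda_n\ge 0$, not all zero, with $X$ maximizing $\sum_i\lambda_i V_i(X'_i)$ over all allocations $X'$. The maximizer assigns each $x$ to $\arg\max_i \lambda_i v_i(x)$. Given the cuts $c_1<\cdots<c_{n-1}$ of $X$, I construct weights from the indifference conditions: $\lambda_1=1$ and $\lambda_{i+1}=\lambda_i\, v_i(c_i)/v_{i+1}(c_i)$. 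Non-wastefulness places each $c_i$ in $U_i\cap U_{i+1}$, so generically both densities are strictly positive and the $\lambda_i$ are strictly positive. It then remains to check that the pointwise argmax under these weights is $X$. For any pair $i<j$, the function $\lambda_i v_i-\lambda_j v_j$ is piecewise linear on $[0,1]$, with slopes on $(-\infty,p_i]$, $[p_i,p_j]$, $[p_j,\infty)$ equal to $(\lambda_i-\lambda_j)k$, $-(\lambda_i+\lambda_j)k$, $(\lambda_j-\lambda_i)k$; evaluating at the endpoints of $U_i\cap U_j=[l_j,r_i]$ shows it starts positive at $l_j$ and ends negative at $r_i$, and a case analysis on the sign of $\lambda_i-\lambda_j$ shows that in between it changes sign at most once. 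Intersecting these pairwise half-spaces forces each agent's argmax region to be a single interval, so the argmax allocation is peak-preserving, and the indifference conditions at the cuts force it to coincide with $X$.

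The main obstacle will be the pairwise sign analysis in the $(\Leftarrow)$ direction, particularly the subcase $\lambda_i<\lambda_j$: in isolation the difference $\lambda_i v_i-\lambda_j v_j$ is unimodal with a minimum in $[p_i,p_j]$, so the single-crossing property is not automatic and requires the boundary values on $U_i\cap U_j$ to rule out a second crossing. Degenerate cases, where a cut $c_i$ lies at $r_i$ or $l_{i+1}$ and the recursive weight construction yields a weight that vanishes or is undefined, will need separate treatment via a limiting argument with $\lambda\ge 0$ and not all zero.
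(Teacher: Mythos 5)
Your forward direction follows the paper's strategy (a value-matched swap that keeps $j$ indifferent and strictly benefits $i$), but the step ``the monotonicity above gives $V_i(A')>V_i(B')$'' does not follow from the fact that the \emph{difference} $v_i-v_j$ is weakly decreasing. Under the constraint $V_j(A')=V_j(B')$ you have $V_i(A')-V_i(B')=\int_{A'}(v_i-v_j)-\int_{B'}(v_i-v_j)$, and a decreasing $g=v_i-v_j$ does not make this positive in general (take $g\equiv m<0$ with $|A'|>|B'|$). What you actually need is that the \emph{ratio} $v_i/v_j$ is strictly decreasing on $U_i\cap U_j$ (a likelihood-ratio argument: $V_i(A')\ge\rho V_j(A')=\rho V_j(B')\ge V_i(B')$ with $\rho$ the ratio at the separating point). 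This is true, but it is not free: on the region $x\le p_i$ the log-derivative of the ratio is $k(1/v_i-1/v_j)$, which is negative only if $v_i\ge v_j$ there, i.e.\ only if $v_i(p_i)-v_j(p_j)>-k(p_j-p_i)$; the symmetric inequality is needed for $x\ge p_j$. Both inequalities do hold, but only because of the normalization $\int v_i=\int v_j=1$ (if, say, $v_j(p_i)>v_i(p_i)$ in untruncated form, then $v_j>v_i$ on all of $U_i$ and $v_j$ could not integrate to $1$). So this sub-lemma must be stated and proved; as written the swap step is unjustified.

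The backward direction is where you depart from the paper, and where the more serious gap sits. Certifying Pareto optimality by exhibiting weights $\lambda$ with $X\in\arg\max\sum_i\lambda_iV_i$ requires \emph{all} $\lambda_i>0$ (with a zero weight, a weighted-sum maximizer can treat the corresponding agent arbitrarily badly, so such weights certify nothing); Lyapunov convexity only gives you the converse implication, which you do not need here. Your recursive construction $\lambda_{i+1}=\lambda_i v_i(c_i)/v_{i+1}(c_i)$ genuinely breaks when a cut $c_i$ equals $r_i$ or $l_{i+1}$ --- a perfectly legitimate non-wasteful, peak-preserving allocation --- because then $\lambda_{i+1}=0$ or is undefined, and the ``limiting argument with $\lambda\ge 0$ not all zero'' you propose cannot close this case: a supporting hyperplane with zero components does not imply Pareto optimality, and Pareto optimality is not preserved under limits of allocations. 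Together with the single-crossing verification you yourself flag as incomplete, the $(\Leftarrow)$ direction is not established. Note that the paper's own argument for this direction is far more elementary and avoids all of this: assume a dominating allocation $X'$, take it to be Pareto optimal (hence, by the forward direction, also non-wasteful and peak-preserving, hence given by cut points $c'$), locate the first index $s$ with $c'_s<c_s$, and use non-wastefulness of $X$ to conclude $V_s(X'_s)<V_s(X_s)$, a contradiction. I would recommend replacing the scalarization route with that comparison of cut vectors.
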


\begin{proof}
	First, we show (by contradiction) that if $X$ is Pareto optimal, then it must be non-wasteful and peak preserving. If $X$ was wasteful, some agent $i$ would receive a part of the cake that she does not value, but which by assumption is valued by someone else. A clear Pareto improvement exists, and thus $X$ is not Pareto optimal, a contradiction. 
	
	If $X$ was not peak-preserving, it means it is either not connected or it is connected but it does not respect peaks. In both cases, there must exist two agents $i,j$ such that $p_i<p_j$, $[a,b]\subseteq X_j$ and $(b,c]\subseteq X_i$, where $a<b<b'<c$ (see subfigure 2.a).\footnote{That agents $i,j$ as above must exist is trivial when $X$ does not respect peaks. If $X$ is not connected, there is an agent $\alpha$ who obtains at least two disconnected pieces of cake $[\alpha^0,\alpha']$ and $(\alpha'',\alpha''']$, and agents $\beta$ and $\gamma$ with allocations $(\alpha', \beta']$ and $(\gamma', \alpha'']$, such that $\alpha'< \beta'\leq \gamma' <\alpha''$. If $p_\alpha > p_\beta$, then $i=\beta$ and $j=\alpha$. Otherwise, if $p_\alpha < p_\beta$ and $p_\alpha<p_\gamma$, then $i=\alpha$ and $j=\gamma$. And if $p_\alpha < p_\beta$ and $p_\alpha>p_\gamma$, then $i=\gamma$ and $j=\beta$.} We show geometrically in Figure \ref{fig:fig2} that a reassignment of the interval $[a,c]$ such that $i$ receives $[a,b')$ and $j$ receives $[b',c]$ leads to a Pareto improvement, where $b'$ is chosen so that $V_j(b',c)=V_j(a,b)$.\footnote{A similar argument is used to show that, when two agents divide several objects, at most one object is divided in all Pareto optimal allocations \cite[][p. 255]{moulin2004}. See also Corollary 2.4 in \cite{sandomirskiy2019fair}.}

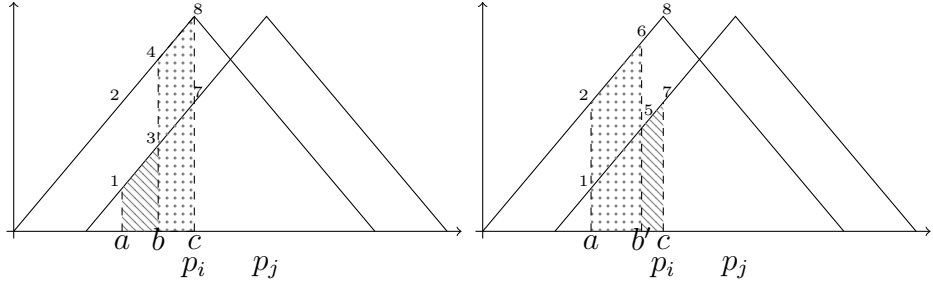
\begin{figure}[h!]
		\centering
\begin{subfigure}{.45\linewidth}
	
	\begin{tikzpicture}[scale=.95]
	\draw[->] (-.1,0) -- (6.2,0);
	\draw[->] (0,-.1) -- (0,3.2);
	
	\draw[-] (0,0) -- (2.5,3) -- (5,0);
	\draw[-] (1,0) -- (3.5,3) -- (6,0);
	
	\draw[pattern=north west lines, pattern color=gray,draw=none] (1.5,0)--(2,0) -- (2,1.2)-- (1.5,.6);
	\draw[pattern=dots, pattern color=gray, distance=.01pt, dashed] (2.5,0) -- (2,0) -- (2,2.4) -- (2.5,3);
	\draw[dashed] (2.5,0) -- (2.5,3);
	\draw[dashed] (1.5,0) -- (1.5,.6);

	\node[] at (2.5,-.5) {$p_i$};
	\node[] at (3.5,-.5) {$p_j$};
	
	\node[] at (1.5,-.15) {$a$};
	
	\node[] at (2,-.1) {$b$};
	\node[] at (2.5,-.15) {$c$};
	
	\node[] at (1.4,.7) {\tiny $1$};
	\node[] at (1.4,1.9) {\tiny $2$};
	\node[] at (1.9,1.3) {\tiny $3$};
	\node[] at (1.9,2.5) {\tiny $4$};
	\node[] at (2.55,1.95) {\tiny $7$};
	\node[] at (2.55,3.1) {\tiny $8$};
	
	\end{tikzpicture}
	\caption{A non peak-preserving allocation.} \label{fig:N1}
\end{subfigure}\begin{subfigure}{.45\linewidth}
	\begin{tikzpicture}[scale=.95]
	\draw[->] (-.1,0) -- (6.2,0);
	\draw[->] (0,-.1) -- (0,3.2);
	
	\draw[-] (0,0) -- (2.5,3) -- (5,0);
	\draw[-] (1,0) -- (3.5,3) -- (6,0);
	
	\draw[pattern=dots, pattern color=gray,draw=none] (1.5,0)--(2.2,0) -- (2.2,2.65)-- (1.5,1.8);
	\draw[pattern=north west lines, pattern color=gray,draw=none] (2.5,0)--(2.2,0) -- (2.2,1.45)-- (2.5,1.8);
	
	\draw[dashed] (1.5,0) -- (1.5,1.8);
	\draw[dashed] (2.5,0) -- (2.5,1.8);
	\draw[dashed] (2.2,0) -- (2.2,2.65);
	
	\node[] at (2.5,-.5) {$p_i$};
	\node[] at (3.5,-.5) {$p_j$};
	\node[] at (1.5,-.15) {$a$};
	\node[] at (2.2,-.1) {$b'$};
	\node[] at (2.5,-.15) {$c$};
	
	\node[] at (1.4,.7) {\tiny $1$};
	\node[] at (1.4,1.9) {\tiny $2$};
	\node[] at (2.3,1.7) {\tiny $5$};
	\node[] at (2.2,2.8) {\tiny $6$};
	\node[] at (2.55,1.95) {\tiny $7$};
	\node[] at (2.55,3.1) {\tiny $8$};
	
	
	\end{tikzpicture}
	\caption{A Pareto improvement.} \label{fig:N2}
\end{subfigure}
		\caption{In subfigure 2.a, we have a non peak-preserving allocation in which agent $j$ gets $[a,b]$ and agent $i$ gets $(b,c]$
			(the sum of utilities is the shaded area). In subfigure 2.b, we show how such allocation is Pareto dominated by a peak-preserving allocation. We reassign the interval $[a,c]$ so that $j$ gets an interval $(b',c]$ such that $V_j(b',c)=V_j(a,b)$. The difference $V_i(a,b')+V_j(b',c)-V_j(a,b)-V_i(b,c)$ is equal to the area of polygon $1-2-6-5$ (coordinates $(a,v_j(a))-(a,v_i(a))-(b',v_i(b'))-(b',v_j(b'))$) minus the area of the polygon $3-4-8-7$ (coordinates $(b,v_j(b))-(b,v_i(b))-(c,v_i(c))-(c,v_j(c))$). The former is strictly larger than the latter because $b'>b$, due to the fact that $c \leq p_j$. Since by construction $V_j(b',c)=V_j(a,b)$, we must have that $V_i(a,b')>V_i(b,c)$, showing that the allocation in subfigure 2.a is Pareto dominated. The argument is completely symmetrical when the allocation endpoints $a,b,c$ are located elsewhere.}
		\label{fig:fig2}
	\end{figure}
	
Second, we prove that if $X$ is non-wasteful and peak-preserving, then it must be Pareto optimal. 
By contradiction, suppose $X$ is Pareto dominated by an alternative allocation $X'$ so that for some agent $i$, $V_i(X_i')>V_i(X_i)$ and $V_j(X'_j) \geq V_j(X_j)$ for all $j \in N$. $X'$ can be assumed to be Pareto optimal, otherwise there exists another allocation $X''$ that Pareto dominates both $X'$ and $X$ (we can reach the Pareto optimal allocation by trading among agents). So let $X'$ be Pareto optimal, which by our previous argument must be non-wasteful and peak preserving. Both $X$ and $X'$ can be completely described by a sequence of $n-1$ cut points, called $c=(c_1,\ldots, c_{n-1})$ and $c'=(c'_1,\ldots, c'_{n-1})$, with endpoint $c_n=c'_n=1$. Let $s$ be the smallest index for which $c'_s<c_s$, if such exists. Then, because $X$ is non-wasteful, $V_s(X_s)>V_s(X'_s)$, and thus $X'$ does not Pareto dominate $X$. If no such index exist, $V_n(X_n)>V_n(X'_n)$, and thus $X'$ does not Pareto dominate $X$. We conclude that $X$ must be Pareto optimal.
\end{proof}

The assumption of common slopes is critical to obtain such a simple characterization of Pareto optimality: when each agent's valuation density has a different slope, Pareto optimal allocations may be disconnected, as Figure \ref{fig:figdisc} shows.

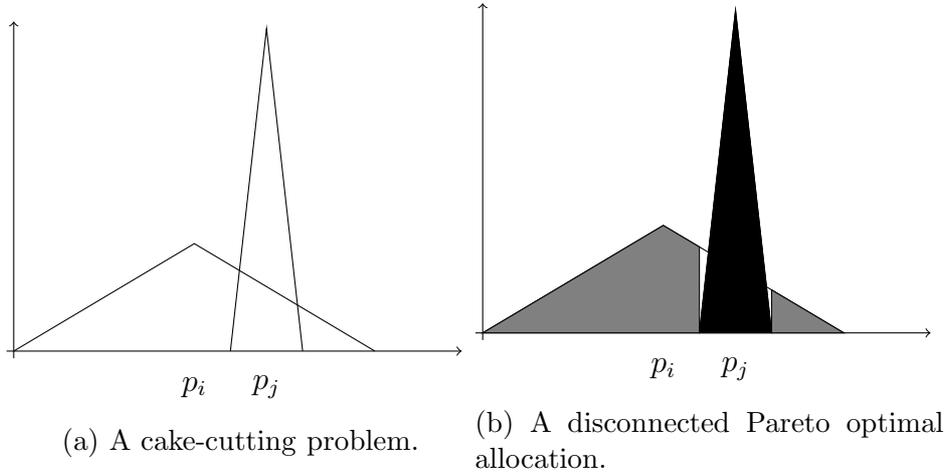
\begin{figure}[h!]
			\centering
	\begin{subfigure}{.45\linewidth}
		
		\begin{tikzpicture}[scale=.95]
		\draw[->] (-.1,0) -- (6.2,0);
		\draw[->] (0,-.1) -- (0,4.6);
		
		\draw[-] (0,0) -- (2.5,1.5) -- (5,0);
		\draw[-] (3,0) -- (3.5,4.5) -- (4,0);

		\node[] at (2.5,-.5) {$p_i$};
		\node[] at (3.5,-.5) {$p_j$};

		\end{tikzpicture}
		\caption{A cake-cutting problem.} \label{fig:N1}
	\end{subfigure}\begin{subfigure}{.45\linewidth}
		\begin{tikzpicture}[scale=.95]
		\draw[->] (-.1,0) -- (6.2,0);
		\draw[->] (0,-.1) -- (0,4.6);
		
		\draw[-] (0,0) -- (2.5,1.5) -- (5,0);
		\draw[-] (3,0) -- (3.5,4.5) -- (4,0);

		\node[] at (2.5,-.5) {$p_i$};
		\node[] at (3.5,-.5) {$p_j$};
		
		\draw[fill=gray] (0 ,0)--(2.5,1.5)--(3,1.2) -- (3,0) -- (0,0) -- cycle;
		\draw[fill=gray] (4 ,0)--(4,.6) -- (5,0) -- cycle;
		\draw[fill=black] (3,0) -- (3.5,4.5) -- (4,0) -- (3,0) -- cycle;
		\end{tikzpicture}
		\caption{A disconnected Pareto optimal allocation.} \label{fig:N2}
	\end{subfigure}
	\caption{An cake-cutting problem with single-peaked valuations and different slopes. The allocation in which agent $i$ obtains the grey parts, and agent $j$ obtains the black parts is Pareto optimal, but not connected.}
	\label{fig:figdisc}
\end{figure}

\section{Three different mechanisms that improve on Wang--Wu}

We turn to the design of a mechanism that returns an allocation that is both Pareto optimal and envy-free. Although such an allocation exists \citep{weller1985fair}, constructing an algorithm that returns it (in the Robertson--Webb model) remains an open question when preferences are single-peaked (and is impossible in more general preference domains \citep{kurokawa2013cut}). Nevertheless, we provide three deterministic mechanisms that achieve both properties individually knowing only agents' peaks and endpoints, and that, at the same time, improve on the original Wang--Wu procedure by either Pareto dominating it or generating a higher total utility. We note that the three mechanisms we describe below are in the Robertson--Webb model, since the peak and endpoints of valuations' densities can be recovered using Algorithm 1 in \cite{wang2019cake}. Figure \ref{fig:fig3} exemplifies how our three different solutions differ and improve on the Wang--Wu mechanism.

\noindent
\begin{enumerate}[leftmargin=0cm,itemindent=.5cm,labelwidth=\itemindent,labelsep=0cm,align=left]

\item {\it Utilitarian Mechanism} (UM). Our first mechanism computes the allocation that maximizes the sum of agents' utilities, i.e. the area under the upper envelope of the valuation density functions. Computing such allocation is easy: it is the peak-preserving allocation described by $n-1$ cut points $c^u_1,\ldots,c^u_{n-1}$, each located at the intersection of the valuation density functions, i.e. $c^u_i$ is such that $v_i(c^u_i)=v_{i+1}(c^u_{i+1})$. The utilitarian mechanism is Pareto optimal (and arguably the best one among all Pareto optimal ones, since it maximizes the sum of utilities), but it may create envy.\footnote{The utilitarian mechanism maximizes the sum of utilities, but maximizing other aggregate utility functions may also be of interest, such as the Nash product \citep{segal2019monotonicity}. See \cite{juarez2013implementing} for further applications of the utilitarian mechanism.} 

\item {\it Leftmost leaves} (LL). Our second mechanism is leftmost leaves and is also simple. Index agents according to their peaks, so that agents 1 is the one with the smallest peak, and let $c^{ll}_1$ be the point for which $V_1[0,c^{ll}_1]=1/n$ and $c'^{ll}_1=max\{c^{ll}_1,l_2\}$. Agent 1 is assigned the interval $[0,c'^{ll}_1]$. Then, let $c^{ll}_2$ be the point such that $V_2[c'^{ll}_1,c^{ll}_2]=\frac{V_2[c'^{ll}_1,1] }{n-1}$ and $c'^{ll}_2=max\{c^{ll}_2,l_3\}$. Agent 2 is assigned the interval $(c'^{ll}_1,c'^{ll}_2]$ and so on, until the last agent who receives the interval $(c'^{ll}_{n-1},1]$. Leftmost leaves generates a Pareto optimal allocation, guaranteeing all agents at least a utility of $1/n$ \citep{kyropoulou2019fair,ortega2019obvious}, in contrast with the Wang--Wu mechanism which guarantees exactly a $1/n$ utility to all agents; thus leftmost leaves Pareto dominates the Wang--Wu mechanism. However, leftmost leaves can generate envy.

\item {\it Modified Wang--Wu} (MWW). The third mechanism is a modification of the original Wang--Wu mechanism. As in the original version, the three points $l_i, p_i, r_i$ for all agents are ordered in increasing sequence, together with the $0$ and $1$ endpoints. The cake is cut at each of these points, generating up to $3n+1$ cake pieces. The difference with the original Wang--Wu procedure is that each of these pieces is divided only among the agents who have a positive value for it. Formally, for each cake piece $P_i$ let $n_i \coloneqq \{i \in N : U_i \bigcap P_i \neq \emptyset\}$. Then, each piece $P_i$ is divided into $2n_i$ equidistant pieces, which are shared into the interested $n_i$ agents as in the Wang--Wu procedure: the first interested agent gets pieces $1$ and $2n_i$, the second one gets pieces $2$ and $2n-1$ and so on. This procedure is envy-free (like the original Wang--Wu procedure, which it Pareto dominates), although is not Pareto optimal since it generates a disconnected allocation. 

We summarize our findings in our third Proposition and in Table \ref{tab:properties}.
\end{enumerate}

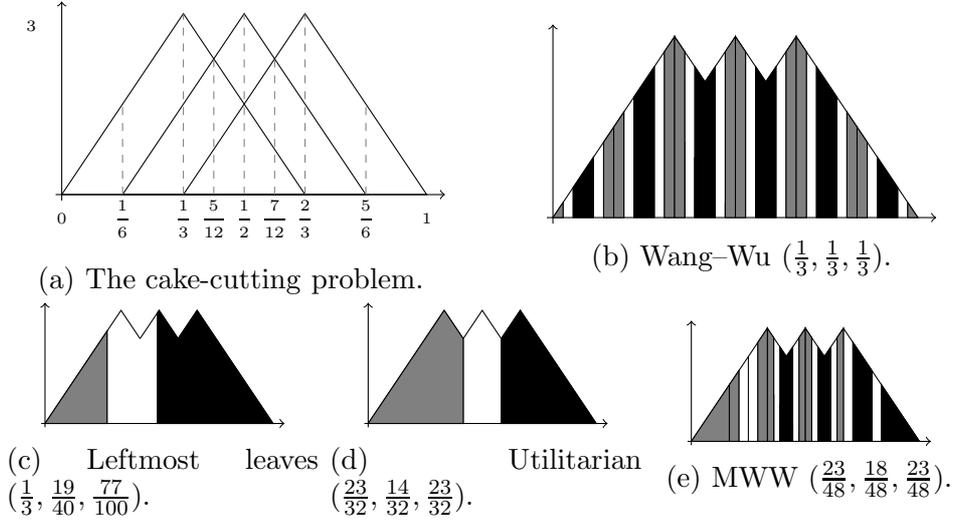
\begin{figure}[h!]
	\centering
	\begin{subfigure}{.49\linewidth}
		\centering
		\begin{tikzpicture}[scale=.8]
		\draw[->] (-.1,0) -- (6.3,0) ;
		\draw[->] (0,-.1) -- (0,3.2) ;
		
		\draw[] (0,0)--(2,3) -- (4,0)-- (0,0);
		\draw[] (1,0)--(3,3) -- (5,0)-- (1,0);
		\draw[] (2,0)--(4,3) -- (6,0)-- (2,0);
		
		\draw[dashed,gray] (1,0)--(1,1.5);
		\draw[dashed,gray] (2,0)--(2,3);
		\draw[dashed,gray] (3,0)--(3,3);
		\draw[dashed,gray] (4,0)--(4,3);
		\draw[dashed,gray] (5,0)--(5,1.5);
		\draw[dashed,gray] (2.5,0)--(2.5,9/4);
		\draw[dashed,gray] (3.5,0)--(3.5,9/4);

		\node[] at (0,-0.4) {\tiny $0$};
		\node[] at (1,-0.4) {\tiny $\dfrac{1}{6}$};
		\node[] at (2,-0.4) {\tiny $\dfrac{1}{3}$};
		\node[] at (2.5,-0.4) {\tiny $\dfrac{5}{12}$};
		\node[] at (3,-0.4) {\tiny $\dfrac{1}{2}$};
		\node[] at (3.5,-0.4) {\tiny $\dfrac{7}{12}$};
		\node[] at (4,-0.4) {\tiny $\dfrac{2}{3}$};
		\node[] at (5,-0.4) {\tiny $\dfrac{5}{6}$};
		\node[] at (-0.5,2.8) {\tiny $3$};
		\node[] at (6, -0.4) {\tiny $1$};
		\end{tikzpicture}
		\caption{The cake-cutting problem.} \label{fig:M1}
	\end{subfigure}%
	\begin{subfigure}{.49 \linewidth}
		\centering
		\begin{tikzpicture}[scale=.8]
		\draw[->] (-.1,0) -- (6.3,0) ;
		\draw[->] (0,-.1) -- (0,3.2) ;
		
		\draw[] (0,0)--(2,3) -- (2.5,2.25);
		\draw[] (2.5,2.25)--(3,3) -- (3.5,2.25);
		\draw[] (3.5,2.25)--(4,3) -- (6,0);
		
		
		\foreach \x in {1,2,3,4,5,6}
		\draw (\x/6,0) -- (\x/6,\x/6*1.5);
		\foreach \x in {1,2,3,4,5}
		\draw (1+\x/6,0) -- (1+\x/6,\x/6*1.5 +1.5);
		
		\foreach \x in {1,2,3}
		\draw (2+\x/6,0) -- (2+\x/6,3 -\x/6*1.5);
		\foreach \x in {4,5}
		\draw (2+\x/6,0) -- (2+\x/6,1.5 +\x/6*1.5);
		\foreach \x in {1,2,3,4,5}
		\draw (3-\x/6,0) -- (3-\x/6,\x/6*1.5);
		
		\foreach \x in {1,2,3}
		\draw (3+\x/6,0) -- (3+\x/6,3 -\x/6*1.5);
		\foreach \x in {4,5}
		\draw (3+\x/6,0) -- (3+\x/6,1.5 +\x/6*1.5);
		\foreach \x in {1,2,3,4,5}
		\draw (3-\x/6,0) -- (3-\x/6,\x/6*1.5);
		
		\foreach \x in {1,2,3,4,5,6}
		\draw (4+\x/6,0) -- (4+\x/6,3 -\x/4);
		\foreach \x in {1,2,3,4,5}
		\draw (5+\x/6,0) -- (5+\x/6,1.5 -\x/4);
		
		\draw[fill=gray] (0 ,0)--(1/6,0) -- (1/6,1/4)-- (0,0) -- cycle;
		\draw[fill=gray] (5/6 ,0)--(1,0) -- (1 , 1.5)-- (5/6,5/4) -- cycle;
		\draw[fill=black] (2/6 ,0)--(4/6,0) -- (4/6 , 1)-- (2/6,1/2) -- cycle;
		\draw[fill=gray] (1 ,0)--(1+1/6,0) -- (1+1/6,7/4)-- (1,1.5) -- cycle;
		\draw[fill=gray] (1+5/6 ,0)--(2,0) -- (2 , 3)-- (1+5/6,11/4) -- cycle;
		\draw[fill=black] (1+2/6 ,0)--(1+4/6,0) -- (1+ 4/6 , 10/4)-- (1+ 2/6,2) -- cycle;
		
		\draw[fill=gray] (2 ,0)--(2+1/6,0) -- (2+1/6,11/4)-- (2,3) -- cycle;
		\draw[fill=gray] (2+5/6 ,0)--(3,0)--(3,3) -- (2+5/6,11/4) -- cycle;
		\draw[fill=black] (2+2/6 ,0)--(2+4/6,0)--(2+4/6,10/4) -- (2+3/6,9/4) -- (2+2/6, 10/4) -- cycle;
		\draw[fill=gray] (3 ,0)--(3+1/6,0) -- (3+1/6,11/4)-- (3,3) -- cycle;
		\draw[fill=gray] (3+5/6 ,0)--(4,0) -- (4,3)-- (3+5/6,11/4) -- cycle;
		\draw[fill=black] (3+2/6 ,0)--(3+4/6,0)--(3+4/6,10/4) -- (3+3/6,9/4) -- (3+2/6, 10/4) -- cycle;
		
		\draw[fill=gray] (4 ,0)--(4+1/6,0) -- (4+1/6,11/4)-- (4,3) -- cycle;
		\draw[fill=gray] (4+5/6 ,0)--(5,0)--(5,1.5) -- (4+5/6,7/4) -- cycle;
		\draw[fill=black] (4+2/6 ,0)--(4+4/6,0)--(4+4/6,2) -- (4+2/6,10/4) -- cycle;
		\draw[fill=gray] (5 ,0)--(5+1/6,0) -- (5+1/6,5/4)-- (5,1.5) -- cycle;
		\draw[fill=gray] (5+5/6 ,0)--(6,0)-- (5+5/6,1/4) -- cycle;
		\draw[fill=black] (5+2/6 ,0)--(5+4/6,0)--(5+4/6,1/2)--(5+2/6,1) -- cycle;
		\end{tikzpicture}
		\caption{Wang--Wu ($\frac{1}{3}, \frac{1}{3}, \frac{1}{3}$).} \label{fig:M2}
	\end{subfigure} %
	
	\begin{subfigure}{.3 \linewidth}
		\centering
		\begin{tikzpicture}[scale=.5]
		\draw[->] (-.1,0) -- (6.3,0) ;
		\draw[->] (0,-.1) -- (0,3.2) ;
		
		\draw[] (0,0)--(2,3) -- (2.5,2.25);
		\draw[] (2.5,2.25)--(3,3) -- (3.5,2.25);
		\draw[] (3.5,2.25)--(4,3) -- (6,0);
		
		\draw[] (0.2721*6,0)--(0.2721*6,0.2721*6*1.5);
		\draw[] (0.4932*6,0)--(0.4932*6,0.3265*6*1.5);
		\draw[fill=gray] (0.2721*6,0)--(0.2721*6,0.2721*6*1.5) -- (0,0) --cycle;
		\draw[fill=black] (0.4932*6,0)--(0.4932*6,0.3265*6*1.5)-- (3,3)-- (3.5,9/4) -- (4,3) --(6,0)--(0.4932*6,0) --cycle;
		
		\end{tikzpicture}
		\caption{Leftmost leaves  ($\frac{1}{3},\frac{19}{40},\frac{77}{100})$.} \label{fig:M3}
	\end{subfigure} %
	\begin{subfigure}{.3\linewidth}
		\centering
		\begin{tikzpicture}[scale=.5]
		\draw[->] (-.1,0) -- (6.3,0) ;
		\draw[->] (0,-.1) -- (0,3.2) ;
		
		\draw[] (0,0)--(2,3) -- (2.5,2.25);
		\draw[] (2.5,2.25)--(3,3) -- (3.5,2.25);
		\draw[] (3.5,2.25)--(4,3) -- (6,0);
		
		\draw[] (2.5,0)--(2.5,9/4);
		\draw[] (3.5,0)--(3.5,9/4);
		\draw[fill=gray] (0 ,0)--(2.5,0)--(2.5,9/4) -- (2,3) -- (0,0) -- cycle;
		\draw[fill=black] (3.5 ,0)--(6,0)--(4,3) -- (3.5,9/4) -- (3.5,0) -- cycle;
		\end{tikzpicture}
		\caption{Utilitarian ($\frac{23}{32},\frac{14}{32},\frac{23}{32}$).} \label{fig:M5}
	\end{subfigure} %
	\begin{subfigure}{.3\linewidth}
		\centering
		\begin{tikzpicture}[scale=.5]
		\draw[->] (-.1,0) -- (6.3,0) ;
		\draw[->] (0,-.1) -- (0,3.2) ;
		
		\draw[] (0,0)--(2,3) -- (2.5,2.25);
		\draw[] (2.5,2.25)--(3,3) -- (3.5,2.25);
		\draw[] (3.5,2.25)--(4,3) -- (6,0);
		
		\draw[] (1,0)--(1,1);
		\draw[] (2,0)--(2,3);
		\draw[] (3,0)--(3,3);
		\draw[] (4,0)--(4,3);
		\draw[] (5,0)--(5,1);

		\draw (1,0) -- (1,1.5);
		\foreach \x in {1,2,3,4}
		\draw (1+\x/4,0) -- (1+\x/4,\x/4*1.5 +1.5);
		
		\foreach \x in {1,2,3}
		\draw (2+\x/6,0) -- (2+\x/6,3 -\x/6*1.5);
		\foreach \x in {4,5}
		\draw (2+\x/6,0) -- (2+\x/6,1.5 +\x/6*1.5);
		\foreach \x in {1,2,3,4,5}
		\draw (3-\x/6,0) -- (3-\x/6,\x/6*1.5);
		
		\foreach \x in {1,2,3}
		\draw (3+\x/6,0) -- (3+\x/6,3 -\x/6*1.5);
		\foreach \x in {4,5}
		\draw (3+\x/6,0) -- (3+\x/6,1.5 +\x/6*1.5);
		\foreach \x in {1,2,3,4,5}
		\draw (3-\x/6,0) -- (3-\x/6,\x/6*1.5);
		
		\draw (1,0) -- (1,1.5);
		\foreach \x in {1,2,3,4}
		\draw (4+\x/4,0) -- (4+\x/4, 3- \x/4*1.5);

		\draw[fill=gray] (2 ,0)--(2+1/6,0) -- (2+1/6,11/4)-- (2,3) -- cycle;
		\draw[fill=gray] (2+5/6 ,0)--(3,0)--(3,3) -- (2+5/6,11/4) -- cycle;
		\draw[fill=black] (2+2/6 ,0)--(2+4/6,0)--(2+4/6,10/4) -- (2+3/6,9/4) -- (2+2/6, 10/4) -- cycle;
		\draw[fill=gray] (3 ,0)--(3+1/6,0) -- (3+1/6,11/4)-- (3,3) -- cycle;
		\draw[fill=gray] (3+5/6 ,0)--(4,0) -- (4,3)-- (3+5/6,11/4) -- cycle;
		\draw[fill=black] (3+2/6 ,0)--(3+4/6,0)--(3+4/6,10/4) -- (3+3/6,9/4) -- (3+2/6, 10/4) -- cycle;
		
		\draw[fill=gray] (0 ,0)--(1,0)--(1,1.5) -- (0,0) -- cycle;
		\draw[fill=gray] (1 ,0)--(1+1/4,0)--(1+1/4, 1.25*1.5) -- (1,1.5) -- cycle;
		\draw[fill=gray] (1+3/4 ,0)--(2,0)-- (2,3) --  (1 +3/4, 1.75* 1.5) -- cycle;
		\draw[fill=black] (6 ,0)--(5,0)--(5, 1.5) --  cycle;
		\draw[fill=black] (4+1/4 ,0)--(4 +3/4,0)-- (4 +3/4, 15/8) --  (4 +1/4, 21/8) -- cycle;
		\end{tikzpicture}
		\caption{MWW ($\frac{23}{48},\frac{18}{48},\frac{23}{48}$).} \label{fig:M5}
	\end{subfigure} 
	
	\caption{An example showing the differences between the four different mechanisms. Agents 1, 2, and 3 receive the grey, white, and black pieces respectively. Agents' utilities appear in parentheses.}
	\label{fig:fig3}
\end{figure}

\begin{table}[h!]
	\begin{center}
		\caption{Summary of the properties of each mechanism.}
		\label{tab:properties}
	\begin{tabular}{lccccc}
		        \toprule
					&Efficient    & 	Envy-free  &	Max Utility & Dominates WW   \\
					\midrule
		Wang--Wu          & No  & Yes & No  &  \\

		Utilitarian       & Yes & No  & Yes & No \\
		LL               & Yes & No  & No  & Yes \\
		MWW  			  & No 	& Yes & No  & Yes  \\

		\bottomrule
	\end{tabular}
\end{center}
\end{table}

\begin{proposition}
	\label{thm:prop3}
	The utilitarian mechanism is Pareto optimal and maximizes the sum of agents' utilities. Leftmost leaves is Pareto optimal and Pareto dominates the original Wang--Wu mechanism. MWW is envy-free and also Pareto dominates the original Wang--Wu mechanism. 
\end{proposition}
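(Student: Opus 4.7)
My plan is to verify the three assertions separately, leaning on Proposition \ref{thm:prop2} for the Pareto optimality claims and on a piecewise midpoint-symmetry argument for envy-freeness. For the utilitarian mechanism (UM), sum-maximality is immediate: since $\sum_{i\in N}V_i(X_i)=\int_0^1 v_{\sigma(x)}(x)\,dx$, where $\sigma(x)$ denotes the agent who receives point $x$, the integral is maximized by assigning each $x$ to the agent with the largest density at $x$. Because the densities are tents with common slope and agents are ordered by peak, the pointwise maximum switches from agent $i$ to agent $i+1$ at precisely the intersection $c^u_i$ with $v_i(c^u_i)=v_{i+1}(c^u_i)$, which is exactly the UM allocation. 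The resulting allocation is peak-preserving by construction and non-wasteful (the pointwise-maximum density is positive almost everywhere on $[0,1]$ because $\bigcup_{j\in N} U_j=[0,1]$), so Proposition \ref{thm:prop2} yields Pareto optimality.

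For leftmost leaves (LL), peak-preservingness is immediate from the indexing by increasing peak and the left-to-right construction. Non-wastefulness amounts to showing $X_i\subseteq U_i$ for every $i$: the cap $\max\{c^{ll}_i,l_{i+1}\}$ prevents agent $i+1$'s interval from beginning inside $[0,l_{i+1})$, while the common-slope hypothesis together with $\bigcup_{j\in N} U_j=[0,1]$ forces consecutive supports to overlap, so the cap never pushes the cut past $r_i$. Proposition \ref{thm:prop2} then gives Pareto optimality. For weak Pareto dominance over Wang--Wu, I invoke the $1/n$-proportionality of LL established in the cited works: since $V_i(X_i^{LL})\geq 1/n = V_i(X_i^{WW})$ for every $i$, the utility vector of LL weakly dominates that of WW.

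For modified Wang--Wu (MWW), envy-freeness is shown piece by piece. Within any sub-interval $P_\ell$ between two consecutive critical points in $\{0,1\}\cup\{l_i,p_i,r_i\}_{i\in N}$, every $v_j$ is affine, so dividing $P_\ell$ into $2n_\ell$ equal sub-pieces and awarding each interested agent a mirrored pair yields $V_j(Y)=V_j(P_\ell)/n_\ell$ for any such mirrored pair $Y$ and any interested $j$, by the midpoint-symmetry argument underlying the original Wang--Wu analysis. A non-interested agent values every subset of $P_\ell$ at zero and hence neither envies nor is envied there. Summing over $\ell$ gives
\[
V_j(X_j^{MWW})-V_j(X_k^{MWW})=\sum_{\ell\,:\,j\text{ interested},\,k\text{ not}}\frac{V_j(P_\ell)}{n_\ell}\geq 0,
\]
establishing envy-freeness. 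For weak Pareto dominance over WW, the same bookkeeping gives
\[
V_j(X_j^{MWW})=\sum_{\ell\,:\,j\text{ interested}}\frac{V_j(P_\ell)}{n_\ell}\geq\frac{1}{n}\sum_{\ell\,:\,j\text{ interested}}V_j(P_\ell)=\frac{1}{n}=V_j(X_j^{WW}),
\]
with strict inequality whenever some $P_\ell$ in the support of $j$ has $n_\ell<n$.

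The main obstacle I anticipate is the non-wastefulness verification for LL, which rests on the geometric fact that consecutive supports overlap under common slope together with full coverage; without it the recursive cap operation could step past $r_i$ and spoil both non-wastefulness and the proportionality guarantee. Once this geometric fact is in place, Proposition \ref{thm:prop2} handles both LL and UM, while the piecewise midpoint symmetry inherited from the original Wang--Wu procedure handles MWW.
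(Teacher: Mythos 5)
Your proof follows essentially the same route as the paper's: Proposition \ref{thm:prop2} delivers Pareto optimality of the utilitarian and leftmost-leaves mechanisms, the piecewise midpoint-symmetry argument from Wang--Wu's Theorem 2 delivers envy-freeness of MWW, and proportionality comparisons deliver dominance over Wang--Wu; you merely fill in details (the pointwise-maximum argument for sum-maximality, non-wastefulness of LL, the explicit per-piece accounting for MWW) that the paper delegates to citations. The one point you leave weaker than the paper is strictness of the Pareto dominance of LL over WW --- the paper's definition of dominance requires a strict improvement for some agent, and the paper closes this by noting that LL yields exactly $1/n$ to everyone only when all agents have identical preferences, which is ruled out by the standing distinct-peaks assumption.
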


\begin{proof}
	Proposition 2 implies that the utilitarian and leftmost leaves mechanisms are Pareto optimal. MWW is envy-free because, for any of the (up to) $3n+1$ subintervals generated, any agent values the share received by of any other agent (among those who also receive a share) as much as her own (as in Theorem 2 in \cite{wang2019cake}). Leftmost leaves Pareto dominates Wang--Wu follows because leftmost leaves is proportional, and only assigns a utility of $1/n$ to every agent when all agents' have the same preferences \citep{kyropoulou2019fair}, whereas Wang--Wu always gives a utility of $1/n$ to all agents. That MWW Pareto dominates Wang--Wu follows from the fact that, although both WW and MWW divide the cake into the same number of pieces ($3n+1$), MWW only assigns to agents cake shares of the pieces they desire.
\end{proof}

\section{Conclusion}
\label{sec:concl}
Dividing a cake fairly and efficiently is a difficult problem, even when agents have simple preferences. Nevertheless, we have proposed fair division mechanisms that are either Pareto optimal or envy-free, and which Pareto dominate the allocation generated by the envy-free procedure of Wang--Wu. Using any of these mechanisms instead is justified due to the large utility losses that the Wang--Wu mechanism can generate. 

\section*{Acknowledgements}
We are grateful to Chenhao Wang, Erel Segal-Halevi and an anonymous referee for their helpful comments. Sarah Fox proofread this paper. This work was completed while Bhavook Bhardwaj was a visiting student at Queen's University Belfast; their hospitality is gratefully acknowledged. Rajnish Kumar acknowledges financial support provided by British Council Grant UGC-UKIERI 2016-17-059.

\setlength{\bibsep}{0cm}

\end{document}